\newcounter{hours}
\newcounter{minutes}
\newcommand{\Printtime}{\setcounter{hours}{\time/60}%
\setcounter{minutes}{\time-\value{hours}*60}%
\thehours:%
\ifthenelse{\value{minutes}<10}{0}{}\theminutes}
\begin{document}

\title{Stable Nash Equilibria in the Gale-Shapley Matching Game}
\author{Sushmita Gupta$^{1}$ and Kazuo Iwama$^{1}$ and Shuichi Miyazaki$^{2}$}
\author{Sushmita Gupta\thanks{Kyoto University, Japan. \newline
Emails: \texttt{sushmita.gupta@gmail.com, iwama@kuis.kyoto-u.ac.jp, shuichi@media.kyoto-u.ac.jp}}
\addtocounter{footnote}{-1}  
\and  Kazuo Iwama\footnotemark\addtocounter{footnote}{-1} \and Shuichi Miyazaki\footnotemark }

%\thanks{Graduate School of Informatics, Kyoto University}
%\email{sushmita.gupta@gmail.com, iwama@kuis.kyoto-u.ac.jp}
%\and
%\thanks{Academic Center for Computing and Media Studies, Kyoto University}
%\email{shuichi@media.kyoto-u.ac.jp}

\date{}
\maketitle

\begin{abstract}
In this article we study the stable marriage game induced by the men-proposing Gale-Shapley algorithm. 
Our setting is standard: all the lists are complete and the matching
mechanism is the men-proposing Gale-Shapley algorithm. It is well known that in this setting, men cannot
cheat, but women can. In fact, Teo, Sethuraman and Tan \cite{TST01}, show that there is a polynomial time
algorithm to obtain, for a given strategy (the set of all lists) $Q$ and
a woman $w$, the best partner attainable by changing her list. 
However, what if the resulting matching is not stable with respect to $Q$?
Obviously, such a matching would be vulnerable to further manipulation, but is not
mentioned in \cite{TST01}. In this paper, we consider (safe) manipulation that
implies a stable matching in a most general setting. Specifically, our
goal is to decide for a given $Q$, if $w$ can manipulate her list to obtain a strictly better partner with respect to the true strategy $P$ (which may be different from $Q$), and also the outcome is a stable matching for $P$. 
\end{abstract}

\section{Introduction}
Matchings under preferences is an extensively studied area of
theoretical and empirical research that has a wide range of
applications in economics and social sciences.  The most popular and
standard problem in this field is the {\it stable marriage problem} (SMP),
introduced by Gale and Shapley \cite{GS62}, where we have two parties; a
set of men and a set of women.  Each man has a preference list that
orders women according to his preference, and similarly each
woman also has a preference ordering for the men. The goal is to find a stable matching, a matching without
any {\em blocking pair}, where a blocking pair is a pair of a man and a woman who prefer each other to their current partners.

The goal of each player is of course to obtain a good partner (to steal a good partner from colleagues).  Thus, the stable marriage is a typical game played by selfish players and its game-theoretic aspects have been studied for a long time (see Roth and Sotomayor~\cite{RS90b}). For instance, it is well known that if we use the men-proposing Gale-Shapley algorithm (GS-M, in short) as the matching mechanism, then men do not benefit by cheating, \ie., any change of a single list on the men's side does not give him a strictly better partner. Also, if we allow individuals to truncate their lists (\ie delete one or more persons from the list), then there are several (relatively easy) ways to manipulate  to obtain a better outcome. 

For the special case, when all lists are complete, we have known about instances of SMP (see page 65 of \cite{GI89b}, for example), where a woman can improve her outcome by permuting her true preference list, while all others use their true lists. But little else was known about the property of {\it permutation strategies}, such as the time complexity of finding such a list.  It is due to the work of Teo \etalcite{TST01} that this question was resolved. They proved that there is a poly-time algorithm to obtain, for a given set of complete preference lists for all men and women, and a particular woman $w$, the best
partner that $w$ can attain by using one of the $n!$ permutations of her true preference list. As a corollary, it is also
possible to decide in polynomial time if the set of true preference lists
is a Nash equilibrium. Thus, \cite{TST01} was a nice start towards progress
in this topic and their proof is also nontrivial.  However, their
setting that there is no distinction between the ``true'' preference
list of a player and the actual preference list submitted by him/her, is not standard in game theory. This is supported by theory, by the result that there is no mechanism in which the best strategy for all the agents is to tell the truth \cite[Thm 4.4]{RS90b}. For GS-M it is in the best interest for the men to tell the truth, but that is not the case for the women. Hence, we have to consider the possibility that true preference is different from stated preference. 

Recall that in the general setting of game theory, we have a set of
players and their strategies.  The outcome of each player $i$ for his/her
strategy $s_i$, is determined by a function of $s_i$ and all the
other players' strategies.  In the case of stable marriage, the
strategy of each player is his/her preference list, and that can be one of $n!$
different ones and the player's outcome is determined by final matched partner
at the end of the matching algorithm, (GS-M in our case). The true preference list is necessary for assessing an agent's profit or loss; 
the actual stated strategy (submitted list) may of course be very different from
the true list. In other words, a stated list of a woman $w$ and
her true lists are effectively independent; the latter is necessary for computing the quality (\ie benefit, if any) of the outcome for $w$, and as it turns out, has another important role. 

Consider the strategies depicted in Figure \ref{fig:Q1-Q4}: We have five men $\Sb{1,\ldots, 5}$, five
women $\Sb{a,\ldots, e}$, and their current preference lists, given by $Q_{1}$.  Recall that
each person has his/her true preference list, and that is depicted by $P$ in the
table. We want to see how vulnerable $Q_{1}$ is against manipulation. GS-M applied to $Q_{1}$ yields $\mu_{Q_{1}}=\Sb{ (1,a), (2,b), (3,c), (4,d), (5,e)}$, which is
stable with respect to $P$ (called $P$-stable), since $\M_{P}=\M_{Q_{1}}$. However, it is not 
stable in the Nash equilibrium sense, since $d$ can improve her
outcome by changing her list to $Q_{2}(d)$, resulting in $\M_{Q_{2}}=\Sb{ (1,a),
(2,b), (3,c), (4,e), (5,d)}$.  Her new partner $5$ is better than $4$ in terms of $P(d)$ and the matching is again $P$-stable. It is noteworthy that $\M_{Q_{2}}$ is the women-optimal $P$-stable matching, hence it admits no \Ps manipulation strategies. 

But then again, another woman $a$ gets into the game, against $Q_{2}$, and using $Q_{3}(a)$. Her new partner is $2$, who is ranked first in her true list.Unfortunately, though, $\mu_{Q_{3}}$ is not $P$-stable, with 
blocking pairs $(2,b)$ and $(3,b)$. This leaves open the possibility of manipulation by $b$, given by $Q_{4}(b)$, whereby $b$ is matched to $2$, and $a$ to $3$. Thus, $a$ actually has a lower payoff in $Q_{4}$ than $Q_{2}$, thus rendering her manipulating of $Q_{3}$ counter-productive.  This is why we posit that manipulation strategies that are \Ps add a layer of security for the manipulation, since unstable manipulation will always leave open the possibility of others to manipulate, specifically those women who are in a $P$-blocking pair. 

%Thus, we need to consider two stabilities, one for the change of player's mind (which is a usual concept of Nash Equilibrium) and the other for the stability of the matching itself.

%If we assume that only $P$-stable matchings are acceptable as safe result, this $\M_{Q3}$ does not give any profit to woman $a$.  In other %words, her attempt of getting better off once seems successful but is actually not.  

\begin{figure}
\[\begin{array}{|c| ccccc |c| ccccc| ccccc|}
\hline
{\rm Men}& &\multicolumn{3}{c}{P(M)}  & & {\rm Women} & &\multicolumn{3}{c}{P(W)} &  &&\multicolumn{3}{c}{Q_{1}(W)}& \\
\hline
1:& a& b& d& e& c&  a: & 2& \underline{1}& 4& 5& 3&    2& \underline{1}& 4& 5& 3\\
2:& b& a& d& e& c&  b: & \underline{2} & 3& 1& 4& 5&    1& \underline{2}& 4& 5& 3\\
3:& a& b& d& e& c&  c: & 1& 2& 4& 5& \underline{3}&    1& 2& 4& 5& \underline{3}\\
4:& d& e& a& b& c&  d: & 5& \underline{4} & 1& 2& 3&    5& \underline{4}& 1& 2& 3\\
5:& e& d& a& b& c&  e: & 4& \underline{5}& 1& 2& 3&    4& \underline{5}& 1& 2& 3\\
\hline
\end{array}\]

\small{True strategy $P=(P(M),P(W))$ and stated strategy $Q_{1}=(P(M), Q_{1}(W)$.}
\small{The $M$-optimal matching $\M_{P}=\M_{Q_{1}}=\Sb{(1,a), (2,b), (3,c), (4,d), (5,e)}$, thus, $\M_{Q_{1}}$ is \Ps.} 

%\label{fig:P-Q1}
%\caption{True preference $P=(P(M),P(W))$ and a \Ps strategy $Q_{1}=(P(M), Q_{1}(W)$.}
%\end{figure}

%\begin{figure}
\[\begin{array}{|c| ccccc |c| ccccc|ccccc|ccccc|}
\hline
{\rm Men}& &\multicolumn{3}{c}{P(M)}  & & {\rm Women} & &\multicolumn{3}{c}{Q_{2}(W)} &~ & ~&\multicolumn{3}{c}{Q_{3}(W)}& &&\multicolumn{3}{c}{Q_{4}(W)} & \\
\hline
1:& a& b& d& e& c& a: & 2& \underline{1}& 4& 5& 3& \underline{2}& 3& 1& 4& 5 & 2& \underline{3}& 1& 4& 5\\
2:& b& a& d& e& c& b: & 1& \underline{2} & 4& 5& 3 & \underline{1}& 2& 4& 5& 3 & \underline{2}& 3& 1& 4& 5\\
3:& a& b& d& e& c& c: & 1& 2& 4& 5& \underline{3} & 1& 2& 4& 5& \underline{3} & \underline{1}& 2& 4& 5& 3\\
4:& d& e& a& b& c& d: & \underline{5}& 3& 4& 1& 2 & \underline{5}& 3& 4& 1& 2 &5& 3& \underline{4}& 1& 2\\
5:& e& d& a& b& c& e: & \underline{4}& 5& 1& 2& 3 & \underline{4}& 5& 1& 2& 3 &4& \underline{5}& 1& 2& 3\\
\hline
\end{array}\]
\small{The $P$-blocking pairs in $\M_{Q_{3}}$ are $(2,b)$ and $(3,b)$. Manipulation by $b$ gives $Q_{4}=(Q_{3}(-b), P(b))$, resulting in $\M_{Q_{4}}=\Sb{(1,c), (2,b), (3,a), (4,d), (5,e)}$. }
\label{fig:Q1-Q4}
\caption{Manipulation by $d$, followed by $a$, and then $b$.}
\end{figure}

\paragraph{{\bf Our Contributions:}} Motivated by these strategic considerations,
we introduce a {\it stable \Nash}, which is a relaxation of a \Nash, but
nonetheless sufficiently limits an agent's incentive to manipulate the
market. Suppose that the stable matching game is induced by the men-optimal stable matching mechanism.  
For a given strategy $Q$ and a true strategy $P$, $Q$ is said
to be a {\it P-stable \Nash} (in this game) if and only if there is no woman $w$ such
that (i) $w$ can manipulate her list, (while all others retain their $Q$-lists) and obtain a better partner with
respect to $P(w)$, and (ii) the resulting matching is $P$-stable.  Note that there
are several cases that people may try to improve their outcome but actually fail under (i) and (ii). 
For instance, (a) $w$ manipulates her list, and indeed gets matched to $x$, who is better
with respect to $Q$, but not with respect to $P$, and (b)
$w$ manipulates her list, gets indeed a better partner $x$ with respect
to $P$, but the resulting matching is not $P$-stable, for instance $Q_{3}$, in our example. 

Our main result in this paper is the existence of a polynomial time
algorithm that determines if $Q$ is a $P$-stable \Nash.  Our algorithm
has two stages.  In the first stage we obtain triples $(w, x, Q(x;w))$, 
where $w$ is a woman, $x (\neq \mu_{Q}[w])$ is a man, and $Q(x;w)$ is a
list of $w$ such that $\mu_{(Q(-w), Q(x;w))}$ matches $x$ to $w$. Here, the preference lists $(Q(-w), Q(w,x))$, is 
such that only $w$'s list is changed to $Q(x;w)$, while each $a\in M\cup W\sm \{w\}$ uses $Q(a)$ (formal
definitions in \Cref{Sec:Prelim}).  It should be noted that
there may be many different lists of $w$ that if used instead, would also match $w$ to $x$, but this
algorithm outputs just one of them.  The algorithm itself is basically
the same as \cite{TST01}, but the analysis is completely new.  Recall that \cite{TST01} only guarantees that
it outputs a similar triple $(w, x, Q(x;w))$, where $x$ is the {\em best}
possible partner of $w$ (with respect to $Q(w)$). Our arguments are completely new, proving the
more general statement, that {\em all} partners attainable by GS-M can be found using the algorithm.

The second stage is to decide, for each triple $(w, x, Q(x;w))$ obtained
in the first stage, whether or not there exists a list for $w$, say
$Q'(x;w)$, which if used instead of $Q(w)$ would result in $w$ to $x$, with the condition that $\mu_{(Q(-w),Q'(x;w))}$ is
$P$-stable. List $Q(x;w)$ is an obvious candidate for such a $Q'(x;w)$, but
there may be (exponentially) many others, and it is possible that $Q'(x;w)$ results in a \Ps matching but $Q(x;w)$ does not. 
For this, we prove a key lemma that is interesting in its own right. Specifically, we show that $\mu_{(Q(-w),Q'(x;w))}$
is unique regardless of the list $Q'(x;w)$. Therefore, to detect if $Q$ is \Ps \Nash, it
suffices to check if $\mu_{(Q(-w),Q(x;w))}$ is $P$-stable for all the (at
most $n^2$) triples obtained in the first stage.

\paragraph{{\bf Related Work:}}%For the stable marriage problem, ideally there should exist a matching mechanism that makes stating the true preferences the dominant strategy for each agent, a mechanism such as this is called a {\it strategy-proof mechanism}. Roth \cite{RS90b} showed that for the men-proposing Gale Shapley algorithm (GS-M, in short), telling the truth is the dominant strategy for all men. But, \sout{as stated earlier,} there is no mechanism in which the dominant strategy for all the agents to tell the truth.

We denote the true preferences of $(M,W)$ by $P$, and for an individual agent $a\in M \cup W$, the preference list of $a$ by $P(a)$. The $M$-optimal and $W$-optimal matching \wrto a given strategy $Q$ is denoted by $\M_{Q}$. % and $\Mo{W}$, respectively. 

For a stated strategy $Q$, Dubins and Freedman \cite{DF81}, proved that there is no coalition $C$ of men, who have a \MS $P'=(P(-C), P'(C))$ so that the outcome $P'$-stable, and is strictly better than $\M_{P}$, in $P(m)$, for each $m\in C$.  Demange, \etalcite{DGS87} extend this result to include women in the coalition $C$, showing that there is no $P'$-stable matching $\M'$ such that every agent $a\in C$, prefers $\M'[a]$, (in $P(a)$) to his/her partner in {\it every} $P$-stable matching. 

For truncation strategies it was shown by Gale and Sotomayor \cite{GS85} that if there are at least two \Ps matchings, then there is a woman $w$ who has a  unilateral \MS $Q'\in \C{Q}_{w}$ that gives a strictly better outcome than $\M_{P}$, in $P(w)$. If $C=W$, then there is a truncation strategy $P'=(P(M),P'(W))$, such that $\M_{P'}$, is the women-optimal matching for $P$. Considerable work on truncation strategies have been undertaken, see \cite{Ehlers08, RR99} for motivations and applications. In fact, up until the late 1980s, analyses of manipulation strategies of women centred almost exclusively around truncation strategies. 

As stated earlier, (to the best of our knowledge) Teo \etalcite{TST01} was the first to consider permutations of a agent's true lists as a type of \MS. 
They gave a polynomial time algorithm that for a given $w\in W$, computes the best manipulated partner (in $P(w)$) that $w$ can attain using GS-M on any strategy $(P(-w), P'(w))\in \C{P}_{w}$. 

%Teo \etalcite{TST01} considered  $\C{P}_{W}=\Sb{(P(-w), P'(w))\mid w\in W}$, complete list unilateral manipulation strategies of women that involves a woman $w$ misstating her preference, while all others state their true preference. 

Immorlica and Mahidian \cite{IM05}, allowing the men's preference lists to have ties but bounded by a constant, and drawn from an arbitrary probability distribution, while the women's lists are arbitrary and complete; show that with high probability, truthfulness is the best strategy for an any agent, assuming everybody is being truthful as well.

Kobayashi and Matsui \cite{KM10j} consider the possibility that a coalition $C$ of women have a \MS $P'=(P(M),P'(W))$, containing complete lists,  such that $\M_{P'}$ yields specific partnerships for the members of $C$. The situation manifests in two specific forms, depending on the nature of the input. In the first case, the input consists of the complete lists of all men, a partial matching (some agents may be unmatched) $\M'$, and complete lists of the subset of women who are unmatched in $\M'$, denoted by $W\sm C$. The problem is to test whether there exists a permutation strategy for each woman in $C$, such that for the combined strategy $P'=(P(-C), P'(C))$, $\M_{P'}$ is a perfect matching that extends $\M'$. In the second case, the input consists of the lists of all men, a perfect matching $\M$, and lists for women in $W\sm C$. The problem is to test if there are permutation strategies for the women in $C$, such that for $P'=(P(-C), P'(C))$, matching $\M_{P}'=\M$. %\Ma{Separately, Matsui \cite{ } discussed \eqm .....}\maG{I could not get hold of this paper.}

Pini \etalcite{PRVW11} show that for an arbitrary instance of the stable marriage problem, there is a stable matching mechanism for which it is NP-hard to find a \MS. More recently, Deng, \etalcite{DST15a}, drawing on \cite{KM10j} have discussed the possibility of a coalition of women permuting their true lists, while others state theirs truthufully, so as to produce a matching that is Pareto optimal for the members of the coalition. They also give an \Bo{n^{6}} algorithm to compute a strong \Nash that is strongly Pareto optimal for all the coalition partners.

Our work deals with manipulation by women, in the game induced by GS-M, where truth-telling is the best strategy for men. For a background on manipulation strategies for men, we point the reader to \cite{H06, IIIMN13} and \cite{Manlove}.

\section{Preliminaries}\label{Sec:Prelim}

We always use $M$ to denote the set of $n$ men $\{m_1, m_2, \ldots, m_n\}$
and $W$ the set of $n$ women $\{w_1, w_2, \ldots, w_n\}$.  Our matching
mechanism is always the men-proposing Gale-Shapley algorithm: On the
man's side, a man who is {\em single}, {\em proposes} to the woman who is at the
top of his current list.  At the woman's side, when a woman $w$ receives
a proposal from a man $m$, she {\em accepts} that proposal if she is
single.  Otherwise, if she prefers $m$ to her current partner $m'$ ($m'$
to $m$, resp.), then $w$ accepts $m$ ($m'$, resp.) and {\em rejects}
$m'$ ($m$, resp.).  If $m$ is rejected by $w$, then $m$ becomes single
again and $w$ is removed from the $m$'s list.  The algorithm terminates
when there is no single man (for more details, see \cite{GI89b}).  We
assume that if there are two or more single men, the man with the
smallest index does a proposal, thus making the procedure deterministic.

A {\em strategy} $Q$ is a set of {\em preference lists} (or simply {\em
lists}) of all the men in $M$ and all the women in $W$. For a person $x$
in $M \cup W$, $Q(x)$ denotes the $x$'s list in the strategy $Q$. For a given
strategy $Q$, suppose that {\it only} $w$ changes her list from $Q(w)$ to
$Q'(w)$. We denote the resulting strategy by $Q'=(Q(-w),Q'(w))$, and use $\C{Q}_w$ to denote the family of all such strategy $Q'$.  Note that all lists considered in
this article are complete,\ie, they are permutations of $n$ men or $n$
women. 

Let $\mu$ be a (perfect) matching between $M$ and $W$ and $Q$ be a
strategy.  Then we say that $\mu$ is {\em $Q$-stable} if there is no
{\em $Q$-blocking pair}.  Here $Q$-blocking pair is a pair $(m,w)$ of
a man and a woman such that $m$ prefers, in (terms of the preference ordering in) $Q(m)$, $w$ to his partner
in $\mu$ (denoted by $\mu[m]$) and $w$ prefers, in $Q(w)$, $m$ to
$\mu[w]$. If $w$ strictly prefers $m_1$ to $m_2$, in $Q(w)$, then we denote that as 
$m_1 > m_2$ in $Q(w)$. We use $m_1 \geq m_2$, if $m_1 > m_2$, or $m_{1}=m_{2}$.

For a strategy $Q$, $\mu_Q$ denotes the man-optimal stable matching, computed by 
the Gale-Shapley algorithm.  If a man $m$ proposes to a woman $w$ during
this procedure, then we say that $m$ is {\em active} in $Q(w)$ (formally
speaking we should say $m$ is active in $Q(w)$, during the computation of $\mu_{Q}$, but for the sake of brevity, we will omit strategy $Q$ when it is obvious from the context.)

Recall that a woman $w$ changes her list $Q(w)$ for the purpose of manipulation. For a subset $M' \sse M$, let $I$ be an ordering of men in $M'$. Then, $Q(I;w)$ denotes a permutation of $Q(w)$, where the men in $M'$ are at the front in the order in which they appear in $I$. An ordered (sub)list such as $I$, is called a {\it tuple}, and for any given tuple $I$, we define $Q(I;w)=[I, Q(w)\sm I]$
For example, if $Q(w)=(1,2,3,4,5,6)$ and $I=(5, 2)$, then $Q(I;w) =(5,2,1,3,4,6)$. Now we are ready to introduce our main concept.

We are given a strategy $Q$ and a {\em true} strategy $P$.  Then for a woman $w\in
W$, a strategy $Q' \in \C{Q}_{w}$ is said to be a {\it unilateral \MS}
of $w$, if $\M_{Q'}[w] > \M_{Q}[w]$ in $P(w)$, \ie $w$ strictly
prefers the outcome of $\M_{Q'}$ to $\M_{Q}$, with respect to her true
preference/strategy.  If $\M_{Q'}$ is a \Ps matching, then $Q'$ is said to be
a {\it \Ps \MS} of $w$. A strategy $Q$ is said to be a {\it $P$-stable \Nash} if there does
not exist $w\in W$ having a \Ps unilateral \MS $Q'\in \C{Q}_{w}$.

In this paper, we consider the following problem:

\begin{description}\label{Prob}
\item[Problem:] Given a true preference $P$, and a stated preference $Q$, determine if $Q$ is a $P$-stable \Nash. 
%%\item[Problem 2:] Given a true preference $P$ and a stated preference $Q$, determine if $Q$ is a \Nash.
\end{description}

%\begin{problem}\label{Prob}
%Given a true preference $P$, and a stated preference $Q$, determine if $Q$ is a $P$-stable \Nash.
%\end{problem}

\section{Listing active men}

Now our goal is to design an algorithm that, for two given strategies,
a stated strategy $Q$ and a true strategy $P$, answers if $Q$ is a $P$-stable \Nash.  To
do so, we first design an algorithm that outputs the set $N_w(Q)$ of
all possible partners $m$ of a given fixed woman $w$ such that there
is a (manipulated) strategy $Q'=(Q(-w),Q'(w))$, for which the men-optimal stable matching (\ie men-proposing GS algorithm), will match $w$ to $m$. 
By using this algorithm $n$ times, we can obtain $\Sb{N_{w_1}(Q), \ldots, N_{w_n}(Q)}$. The use of this set to prove our main result is explained in the next section.

%If at least one of them includes a man$m$, whom the corresponding $w_{i}$ prefers to $\M_{Q}[w_{i}]$, in $P(w_{i})$, and the resulting matching is  \Ps, then we can conclude that $w_{i}$ has a \Ps \MS, and so $Q$ is not a $P$-stable \Nash.  This conclusion will be proved in the next section.

See \Cref{Algo:Explore}, which is basically the same as the one given by Teo \etalcite{TST01}:
Suppose that $Q(w)=(1,2,3,4,5,6,7,8)$ and the first proposal comes from
man 5.  Then the algorithm adds 5 to $N$ and call ${\bf Explore}(Q(5;w))$,
namely it executes men-proposing Gale-Shapley algorithm (GS-M, in short) after moving man 5 to the front of
$Q(w)$. In general, procedure {\bf Explore} takes as a parameter, $Q(x,I;w)$, a preference list of $w$. As per our notation, $x$ is at the front of this list, followed by the sublist $I$ and then the rest of the men, thus, defining the strategy $Q'=(Q(-w), Q(x, I;w))$.  ${\bf Explore}(Q(x,I;w))$ executes GS-M for the strategy $Q'$ and
produces the set of men $A$ who propose to $w$ after $x$. Now for each
$y \in A$, we check if $y$ is ``new'' (\ie, not yet in $N$).  If so,
we add $y$ to $N$ and call {\bf Explore} recursively
after moving $y$ to the top of $Q(x,I;w)$, else, we do
nothing. 

Since {\bf Explore} is called only once for each man in $N$, its time
complexity is obviously at most $n \times T({\rm GS})$, where $T({\rm GS})$ is the time
complexity of one execution of GS-M, thus, overall it is $\Bo{n^3}$.  The nontrivial part
is the correctness of the argument, which we shall prove now.

\begin{theorem}\label{Theorem:Explore} For a strategy $Q$ and a woman
$w\in W$, \Cref{Algo:Explore} produces \\$N=\Sb{m\in M\mid \exists
Q'\in \C{Q}_{w},~\textrm{s.t.}~\M_{Q'}[w]=m}$ and for each $m \in N$, a
list $Q(m,I; w)$ such that for some partial list $I$, $m$ is active in $Q(I;w)$.
\end{theorem}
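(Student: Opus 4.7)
The argument proceeds in two directions: (i) every $m$ inserted into $N$ is attainable, and (ii) every attainable $m$ is eventually inserted. The ``active in $Q(I;w)$'' property asserted by the theorem is automatic from the construction, since $m$ is added only upon proposing to $w$ during an Explore-call whose list is precisely of this form.

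For (i), the main step is what I shall call the \emph{promotion lemma}: if $m$ is active in GS-M with $w$'s list $Q(I;w)$, then GS-M with $w$'s list replaced by $Q(m,I;w)$ matches $w$ to $m$. I would prove it by running the two executions in parallel. Because $Q(I;w)$ and $Q(m,I;w)$ agree on the relative order of every pair of men that does not involve $m$, every comparison $w$ makes before she ever receives a proposal from $m$ returns the same answer under both lists; hence the global GS-M state (who is held by whom, who is still single, who has been rejected where) is identical in the two runs up to that moment. At the instant $m$ first proposes, $m$ sits at the top of $Q(m,I;w)$, so $w$ accepts him and keeps him for the remainder of the run. Applied to each triple recorded by Explore, this lemma yields (i).

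For (ii), fix an attainable $m$ with attaining list $Q^*(w)$ and consider the GS-M run on $(Q(-w),Q^*(w))$. Let $q_0,q_1,\dots,q_K=m$ denote the \emph{upgrade sequence}---the proposers that $w$ actually accepts, in chronological order---which by the GS-M rule satisfies $q_K>q_{K-1}>\cdots>q_0$ in $Q^*(w)$. I claim by induction on $i$ that the algorithm inserts each $q_i$ into $N$. The base $i=0$ holds because the identity of the first proposer to $w$ in GS-M depends only on the men's lists and so coincides under $Q$ and $Q^*$, making $q_0$ the man the algorithm inserts at initialization. For the inductive step, let $L_i=Q(q_{i-1},\dots,q_0;w)$ be the list on which the recursion has just been invoked (its existence is provided by the previous steps), and compare GS-M on $L_i$ with GS-M on $Q^*$. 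In $L_i$ the top $i$ positions are $q_{i-1},\dots,q_0$ in that order, with every other man strictly below them; consequently, whichever of these $q_j$ is currently held dominates every intermediate proposer $r\notin\{q_0,\dots,q_{i-1}\}$ in both lists (forcing rejection in both), while each subsequent upgrade $q_{j'}$ with $j'\leq i-1$ sits above $q_j$ in both lists (forcing promotion in both). The two runs are therefore synchronized all the way until $q_i$ proposes, so $q_i$ appears among the proposers discovered by Explore on $L_i$; the algorithm then inserts $q_i$ and recurses, driving the induction up to $i=K$ and delivering $m\in N$.

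The main technical obstacle is precisely this synchronization step. The crux is that the artificial prefix $(q_{i-1},\dots,q_0)$ of $L_i$ reproduces, against every man outside the upgrade chain, the domination pattern that $Q^*(w)$ imposes while $w$ works her way through $q_0,\dots,q_{i-1}$, and this prevents any premature divergence between the two executions before $q_i$ arrives. Once that is in place, the recorded list $Q(m,q_{K-1},\dots,q_0;w)$---whose prefix $(q_{K-1},\dots,q_0)$ plays the role of the partial list $I$---simultaneously certifies that $m$ is active in $Q(I;w)$ as required by the theorem, and, by the promotion lemma, attains $w\mapsto m$ under $\M_{(Q(-w),Q(m,I;w))}$.
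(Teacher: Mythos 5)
Your soundness direction (the promotion lemma) is correct, and your synchronization argument for the inductive step is essentially the paper's \Cref{L:Lists-AS}; your ``upgrade sequence'' $q_0,\ldots,q_K$ is exactly the paper's increasing active subsequence. But there is a genuine gap at the point where you write that $L_i=Q(q_{i-1},\ldots,q_0;w)$ is ``the list on which the recursion has just been invoked (its existence is provided by the previous steps).'' The previous step only provides $q_{i-1}\in N$; it does not provide that invocation. \Cref{Algo:Explore} recurses only on men that are \emph{new} (the loop runs over $y\in A\setminus N$), so if $q_{i-1}$ was first discovered along a different branch of the recursion, it entered $N$ during some call $\mathbf{Explore}(Q(I';w))$ whose prefix $I'$ need have nothing to do with $(q_{i-2},\ldots,q_0)$; the algorithm then invoked $Q(q_{i-1},I';w)$ and will never invoke $L_i$ at all. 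Your coupling argument depends crucially on the chain prefix reproducing the domination pattern of $Q^*(w)$, and in this case there is simply no execution on $L_i$ to couple with. As written, your induction covers only the ``every $q_j$ is brand new'' executions, which is the easy case the paper dispenses with before its main lemma.

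What is missing is precisely the paper's key insight, \Cref{L:Active-Men}: if $x$ is active in two lists $Q'(w)$ and $Q''(w)$ for $w$, then every man active after $x$ in $Q'(x;w)$ is also active in $Q''(x;w)$ --- that is, once $x$ is promoted to the front, the set of subsequent proposers to $w$ is independent of the rest of her list. This cannot be obtained by your step-by-step coupling, since the two executions genuinely diverge when the prefixes differ; the paper proves it by a stability argument instead: letting $\mu_1$ be the men-optimal matching when $w$ uses $Q'(y,x;w)$ (so $\mu_1[w]=y$) and $\mu_2$ the one when she uses $Q''(x;w)$ (so $\mu_2[w]=x$), if $y$ preferred $\mu_2[y]$ to $w$ then $\mu_2$ would also be stable for the first strategy, and men-optimality of $\mu_1$ would give $w=\mu_1[y]\geq\mu_2[y]$, a contradiction; hence $y$ prefers $w$ to $\mu_2[y]$ and must propose to her in the second run. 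With this lemma your induction can be repaired exactly as in the paper's \Cref{L:Induction}: your synchronization shows $q_i$ is active after $q_{i-1}$ in the (possibly never-invoked) chain list $L_i$, and \Cref{L:Active-Men} transfers that activity to the list $Q(q_{i-1},I';w)$ that was actually invoked, so $q_i$ is discovered there. Until you add this prefix-independence statement (or an equivalent), the completeness direction of your proof is incomplete.
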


\begin{proof} Let $Q'(w)$ be an arbitrary permutation of $n$ men and
$Q'$ be the strategy $(Q(-w), Q'(w))$.  It is enough to prove if a man
$x\in M$ proposes to $w$ during the computation of $\M_{Q'}$ (\ie $x$
is active in $Q'(w)$), then $x$ is added to $N$ during the execution
of \Cref{Algo:Explore}.

Here we need two new definitions: Suppose that $x_1, x_2, \ldots, x_t$ is
a sequence of proposals received by $w$ during the computation of
$\M_{Q'}$.  Then this sequence is called an {\em active sequence} for
$Q'(w)$, denoted by ${\rm AS'}(w)$.  Also suppose that $y_1,y_2, \ldots, y_s$
is a maximal subsequence of ${\rm AS'}(w)$ such that $y_1=x_1$ and $y_s >
y_{s-1} > \cdots > y_1$ in $Q'(w)$.  Then, this is
called the {\it increasing active subsequence} for $Q'(w)$ and is denoted by
${\rm IAS}'(w)$.  As an example, let $Q'(w)=(1, 2, 3, 4, 5, 6,7,8,9)$ and
${\rm AS}'(w)={5, 6, 3, 4, 2, 8}$.  Then ${\rm IAS}'(w)={5,3,2}$. Now consider a
different list $Q''(w)=(1,2,3,5,9,8,4,6,7)$, thus, $Q'(w)\ne Q''(w)$. 
Then, we can observe that the active sequence and the increasing
active subsequence for $Q''(w)$ are identical to those of $Q'(w)$.
The reason being the following. The lists in $Q'$ and $Q''$
are the same except that of $w$'s, so the first proposal for $w$ must come
from the same man regardless of $w$'s list. Since 5 is accepted by $w$
(both lists are complete), the next proposal should also be from the same man, 6. Now since $5 > 6$ in both $Q'(w)$
and $Q''(w)$, 6 is rejected in both $Q'(w)$ and $Q''(w)$ and
thus, the next proposal must also be same, and so on, we continue.
This observation leads us to the following lemma.

\begin{lemma}\label{L:Lists-AS}For strategies $Q',Q'' \in \C{Q}_{w}$,
let $x_{1}, x_{2}, \ldots, x_{p}$ and $u_{1}, u_{2}, \ldots, u_{q}$,
denote the \ASs for $Q'(w)$ and $Q''(w)$ respectively and let $y_{1},
y_{2}, \ldots, y_{s}$ and $v_{1}, v_{2}, \ldots, v_{t}$ denote the
corresponding increasing active subsequence. Then, the following conditions must hold.

\begin{enumerate}[label=(\alph*)]
\item\label{1st-AM} $x_1=y_1=u_1=v_1$.

%\item\label{rest-AS} If $x_i=u_i$ for all $i$, where $1\leq i\leq l$ for some $l$ ($l\leq p$ and $l\leq q$), and $y_j=v_j$ for all $j$,
%where $y_1, ..., y_j$ and $v_1, ..., v_j$ are increasing \ASs of $x_{1}, x_{2}, \ldots, x_{l}$ and $u_{1}, u_{2}, \ldots, u_{l}$,
%respectively, then $x_{l+1}=u_{l+1}$.  (Note that $y_1, ..., y_j$ and $v_1, ..., v_j$ may be different in general if $x_{1}, x_{2}, \ldots,
%x_{l}$ and $u_{1}, u_{2}, \ldots, u_{l}$ are the same.)\\

\item\label{rest-AS} For an arbitrary $l$ ($l\leq p$ and $l\leq q$), we consider the prefixes of the active sequences up to position $l$
and the prefixes of the corresponding increasing active
subsequences, denoted by $y_1, ..., y_j$ and $v_1, ..., v_j$.  Then, if
$x_i=u_i$ for all $i \leq l$ and $y_k=v_k$ for all $k \leq j$, then
$x_{l+1}=u_{l+1}$.

\end{enumerate}
 \end{lemma}

\begin{proof} By definition, $x_{1}=y_{1}$ and $u_{1}=v_{1}$.
Recall that all the lists in $Q'$ and $Q''$ are the same except those for
$w$.  Furthermore, we use a fixed tie-breaking protocol in the
deterministic GS algorithm. Hence, $x_{1}=u_{1}$, follows directly. 

To prove condition \ref{rest-AS}, let $y_2=x_{i_1+1}, y_3=x_{i_2+1}, \ldots$, and so on. 
Then we can write ${\rm AS'}(w)$ as follows, where $x_2, \ldots,
x_{i_1}, x_{i_1+2},\ldots, x_{i_2}, \ldots $ may be empty.
\[{\rm AS'}(w)=y_1, x_{2}, \ldots, x_{i_1}, y_2, x_{i_1+2}, \ldots, x_{i_{2}},\ldots, y_j, x_{i_{j-1}+2}, \ldots,
x_l, x_{l+1}, \ldots\]

Now one can see that $y_1$ is accepted, all of $x_{2} \ldots, x_{i_1}$
are rejected since they are after $y_1$ in the list by definition.
This continues as $y_2$ is accepted,  $x_{i_1+2}, \ldots,
x_{i_2}$ rejected, and so on.  Now, consider ${\rm AS''}(w)$, depicted below.
\[{\rm AS''}(w)=v_1, u_{2}, \ldots, u_{i_1}, v_2, u_{i_1+2}, \ldots, u_{i_{2}}, \ldots, v_j, u_{i_{j-1}+2}, \ldots,
u_l, u_{l+1}, \ldots\]

By the assumption, these two sequences are identical up to
position $l$, so acceptance or rejection for each proposal follows identically, as discussed above. Therefore, the configuration (see below) of GS-M for $Q'$ at the moment when $x_l$ proposes to $w$ and the {\it configuration}
for $Q''$ when $u_l$ proposes to $w$ are exactly the same.  A
configuration consists of (i) the lists of all men (recall that some
entries are removed when proposals are rejected), (ii) the set of
single men, and (iii) the current temporal partner of each woman.
(Formally this should be shown by induction, but it is straight forward
and may be omitted).  Also the acceptance/rejection for $x_l$ and
$u_l$ is the same.  Thus in either case, the execution of the
(deterministic) GS-M is exactly the same for $Q'$ and
$Q''$ until $w$ receives proposal from $x_{l+1}$ and $u_{l+1}$, respectively. Hence, $x_{l+1}$ and
$u_{l+1}$, should be equal and the lemma is proved. 

%Therefore, at the moment when $u_l$ is now proposing to
%$w$, (i) the lists of all men (recall that some entries are removed
%when proposals are rejected) and (ii) the set of single men are and
%(iii) the current temporal partner of each woman (namely the
%configuration of the Gale-Shapley procedure) are exactly the
%same between $Q'$ and $Q''$ (formally this should be shown by
%induction, it is straightforward and may be omitted).  Also the
%acceptance/rejection for $u_l$ is the same.  Thus in either case, the
%execution of the (deterministic) Gale-Shapley is exactly the same
%between $Q'$ and $Q''$ until the propose comes back to $w$.  Thus
%$x_{l+1}$ and $u_{l+1}$ should be equal and the lemma is proved. 
\end{proof}

Now let us look at the execution sequence of \Cref{Algo:Explore} while
comparing it with the execution sequence of GS-M on $Q'$.
We assume that the active sequence and increasingly active sequence
for $Q'$ are ${\rm AS'}(w)=x_{1}, x_{2}, \ldots, x_{p}$ and ${\rm IAS'}(w)= y_{1}, y_{2}, \ldots, y_{s}$, respectively.  By \Cref{L:Lists-AS}, the first proposal
to $w$ is always $y_1$, so the algorithm starts with $\mathbf{Explore}(Q(y_1;w))$ (we simply say the algorithm invokes $Q(y_1;w)$),
and $N=\{y_1\}$, at the very beginning.

Now we note that it is quite easy to see that the active sequence for $Q(y_1;w)$ should be
$y_1,x_2,\ldots, x_{i_1},\ldots $, \ie it should be identical to that of $Q'(w)$ up
to the position $i_1$. The reason is as follows. We already know the first
active man is always $y_1$ and that is also the first symbol in the
increasing active sequence of both. Thus we can use \Cref{L:Lists-AS},
to conclude that the second symbols should also be the same.  Since, $x_2$
is not in ${\rm IAS'}(w)$, meaning that it is rejected, which is also the same
in $Q(y_1;w)$ since $y_1$ is at the top of the sequence.  Thus the
third symbol is the same in both and so on up to position $i_1$.
Then the next symbol in ${\rm AS'}(w)$ is $y_2$ and its that is also active in
$Q(y_1;w)$, meaning $Q(y_2,y_1;w)$ is invoked by the algorithm.  (The
algorithm also invokes $Q(x_2,y_1;w)$, $Q(x_3,y_1;w), \ldots, Q(x_{i_1},y_1;w)$ also, but these are not important for us at this
moment).

We again consider the active sequence for $Q(y_2,y_1;w)$ and by the
same argument presented earlier, we can conclude that it is identical to
${\rm AS'}(w)$ up to position $i_2$ and so $y_{3}$ is found to be an active man. Hence, $Q(y_3,y_2,y_1;w)$ is invoked if $y_{3}$ was not already present in $N$.  

Continuing like this, we note that if $Q(y_s, y_{s-1}, \ldots ,y_1;w)$ is
invoked, then we are done since its active sequence is identical
to that of $Q'(w)$.  However, this case happens only if each $y_{i}$, $2\leq i\leq s$, is a brand new active man found during the invocation of $Q(y_{i-1},\ldots, y_{1};w)$. If one them is not new then the subsequent lists are not invoked, and yet, we are assured due to \Cref{L:Active-Men} that \Cref{Algo:Explore} will detect all the active men in $Q'(w)$.

%======{\bf Motivation for \Cref{L:Active-Men}}=======\\

\Cref{L:Active-Men} is rather surprising and maybe of independent interest. For two lists $Q'(w)$ and $Q''(w)$, that are distinct and arbitrary orderings on men, we assume nothing about the execution of GS-M on the two lists except that a particular man $x$ is active in either list. Yet, we are able to show that a man who proposes to $w$ after $x$ when $Q'(x;w)$ is used, must also propose when $Q''(x;w)$ is used. This result allows us to focus solely on active men that have been discovered in the current invocation of $\mathbf{Explore}$, thereby restricting the number of recursion steps to \Bo{n}.  

\begin{lemma}\label{L:Active-Men}
For strategies $Q'$ and $Q''$ in $\C{Q}_w$, suppose that $x$ is active in both $Q'(w)$ and $Q''(w)$. Then a man who is active in
$Q'(x;w)$ after $x$, is also active in $Q''(x;w)$.
\end{lemma}

\begin{proof}Let $y$ denote a man who is active in $Q'(x;w)$ after $x$. Consider the following strategies, \[\textrm{$Q_{y}=(Q'(-w), Q'(y,x;w))$ and $Q_{x}=(Q''(-w),Q''(x;w))$,}\]
and the $M$-optimal matchings denoted by $\M_{1}$ and $\M_{2}$, respectively; \ie $\M_{1}[w]=y$ and $\M_{2}[w]=x$.

The lists are complete, and $Q(y)=P(y)$, so either $\M_{2}[y] > w$, or
$w> \M_{2}[y]$ in $P(y)$.  If $\M_{2}[y]>w$
then matching $\M_{2}$ is $Q_{y}$-stable. Thus, both $\M_{1}$ and
$\M_{2}$ are $Q_{y}$-stable, with $\M_{1}$ being the men-optimal matching. But,
then $w=\M_{1}[y]\geq \M_{2}[y]$ in $P(y)$, contradicting 
the initial assumption that $\M_{2}[y]> w$.
Therefore, we conclude that $w> \M_{2}[y]$ in $P(y)$,
implying that $y$ should propose to $w$ before $\M_{2}[y]$ or that $y$
is active in $Q''(x;w)$. Hence, the proof is complete. 
\end{proof}

The next lemma will complete the proof. Note that we are still
using ${\rm AS'}(w)$, as given in the proof of \Cref{L:Lists-AS}, and using that to define the following lists for $w$, %similar to the above good case.
\[\te{$Q_{1}(w)=Q_{1}(y_{1};w)$, and $Q_{j+1}(w)=[y_{j+1}, Q_{j}(w)\sm \{y_{j}\}]$, for $1\leq j\leq s-1$. }\]

\begin{lemma}\label{L:Induction}For each $j$, $1\leq j\leq s$, the algorithm invokes $Q(y_j,I;w)$ for some tuple $I$ and each man in $\Sb{x_{i_{j-1}+2}, \ldots, x_{i_{j}}, y_{j+1}}$ is active in it.
\end{lemma}

\begin{proof}By induction.  The base case is given by $y_{1}$, for which it has been shown earlier
that $Q(y_{1};w)$ is invoked at the beginning and every man in
$\Sb{x_2,\ldots, x_{i_1}, y_2}$ is active in $Q(y_{1};w)$ after
$y_1$. Thus, the base case has been proved.

Suppose that the induction hypothesis holds for $y_{t}$, where $t\leq s-2$, \ie for some tuple $I$, $Q(y_t, I;w)$ is invoked, and each man in $\Sb{x_{i_{t-1}+2}, \ldots, x_{i_{t}}, y_{t+1}}$ is active in it. We will complete the proof by showing that the hypothesis holds for $y_{t+1}$.
If $y_{t+1}$ is ``new'', \ie it is added to $N$ during the invocation of $Q(y_{t},I;w)$, then $Q(y_{t+1}, y_t, I;w)$, is invoked subsequently.
Using the fact that $y_{t+1}$ is active in both $Q_{t+1}(w)$ and $Q(y_{t},I;w)$, and all the men in $\Sb{x_{i_{t}+2}, \ldots, y_{t+2}}$ are active in $Q_{t+1}(w)$ after $y_{t+1}$, \Cref{L:Active-Men} applied to each of them, implies that they are also active in $Q(y_{t+1}, y_t, I;w)$. Hence, for this case, the hypothesis is proved for $y_{t+1}$.

If $y_{t+1}$ is already in $N$ when $Q(y_t, I;w)$ is invoked, then
for some tuple $I'$, $y_{t+1}$ should have been added to $N$, during the invocation of $Q(I';w)$.  Thus, $Q(y_{t+1}, I';w)$ would have been invoked prior to $Q(y_{t},I;w)$. Using the same argument (on $Q_{t+1}(w)$ and $Q(y_{t+1},I';w)$) that we used for the earlier case, we conclude that even for this case, the hypothesis holds for $y_{t+1}$. 
\end{proof}

Thus, we have shown that all the men in ${\rm AS'}(w)$ are active somewhere during the execution of the algorithm and thus, all are present in $N$ at the end of the execution. This completes the proof of \Cref{Theorem:Explore}. 
\end{proof}

\LinesNumberedHidden
\begin{algorithm}[ht]
\KwIn{Strategy $Q$, and a woman $w\in W$}
\KwOut{Sets $N=\{m \in M \mid \exists ~Q'\in \C{Q}_{w}, ~\M_{Q'}[w]=m\}$, and $L=\{Q(m,I;w)~|~ m\in N,~Q'\!=\!(Q(-w),Q(m,I;w)),~ \M_{Q'}[w]=m\}$}

\vspace{0.25cm}

\ShowLn Let $x_{1}$ be the first active man in $Q(w)$;

\ShowLn Let $N \addto \{x_{1}\}$ and $L \addto \{Q(x_{1};w)\}$;

\ShowLn$\mathbf{Explore}(Q(x_{1};w))$\;
\vspace{0.5cm}

Procedure $\mathbf{Explore}(Q(x;w))$\\

\vspace{0.25cm}

Let $A \addto \Sb{ \textrm{men who are active in $Q(x;w)$ after $x$}}$\;

\ForEach{$y \in A\sm N$}{

$N \addto N \cup \{y\}$ and $L \addto L \cup \{Q(y,x;w)\}$\;

${\bf Explore}(Q(y,x; w))$\;
}

\Return{$(N, L)$}\;

\caption{$\A(Q,w)$, \cite{TST01} }
\label{Algo:Explore}
\end{algorithm}

\section{Algorithm for $P$-Stable Nash Equilibrium}

In this section, we consider the problem of deciding,
for a given true preference $P$, and stated preference $Q$, whether
$Q$ is $P$-stable \Nash.  We show that this problem is solvable in time $\Bo{n^{4}}$. \Cref{Algo:Ps-NE} uses \Cref{Algo:Explore} as a
subroutine.

%\LinesNumberedHidden
\begin{algorithm}[h]
\KwIn{True strategy $P$, stated strategy $Q$, and the set of women $W$.}
\KwOut{Answers ``Yes'',  if $Q$ is a \Ps \Nash, else ``No''. }

\vspace{0.5cm}

\ForEach{$w\in W$}{

Run \Cref{Algo:Explore} to obtain $N_{w}(Q)$ and $L_{w}(Q)$;

Let $N'\addto \Sb{m \in N_{w}(Q)~\textrm{s.t. $w$ prefers $m$ to $\mu_{Q}[w]$ \wrto $P(w)$}}$;

%\If{$N'= \emptyset$}{
%Output ``$Q$ is a \Nash'' \;
%\Return{``Yes''}\;
%}
%\Else{
%Output ``$Q$ is not a \Nash'' \;

\ForEach{$m \in N'$}{

Let $Q(m, I;w) \in L_{w}(Q)$ be the list that yields $(m,w)$ as a matched pair;

Compute $\mu$, the men-optimal stable matching for $(Q(-w), Q(m, I;w))$;

%Compute the men-optimal stable matching $\mu$ with respect to $(Q(-w), Q(m;w))$ \;
 
\If{$\mu$ is $P$-stable}{
\Return{``No''}\;
} 
}
}

\Return{``Yes''}\;

\caption{Algorithm for testing \Ps \Nash}
\label{Algo:Ps-NE}
\end{algorithm}

To show the correctness of \Cref{Algo:Ps-NE}, the following lemma plays a key
role.

\begin{lemma}\label{Lemma:unique}Suppose that $Q$ is an arbitrary (unilateral) manipulation strategy of woman
$w$. Then for any (fixed) man $\overline{m}$, a $Q$-stable matching (if
any) that matches $\overline{m}$ to $w$, is unique. %is same as the one computed by \Cref{Algo:Explore}. %\Blue{unique} 

\end{lemma}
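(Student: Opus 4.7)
My plan is to interpret the lemma as it is used in \Cref{Algo:Ps-NE}: for any two strategies $Q_1,Q_2\in\C{Q}_w$ (each of the form $(Q(-w),Q_i(w))$) satisfying $\mu_{Q_1}[w]=\mu_{Q_2}[w]=\overline{m}$, I will show $\mu_{Q_1}=\mu_{Q_2}$. The central construction is an auxiliary strategy $Q^*=(Q(-w),Q^*(w))$ whose list $Q^*(w)$ places $\overline{m}$ at the top (with the remaining men in any order). Both $\mu_{Q_i}$ are $Q^*$-stable, because the only $Q^*$-blocking pairs not already present in $Q_i$ would have to take the form $(m,w)$ with $m$ above $\overline{m}=\mu_{Q_i}[w]$ in $Q^*(w)$, and no such $m$ exists since $\overline{m}$ is at the top.

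Let $S$ denote the set of $Q^*$-stable matchings $\mu$ with $\mu[w]=\overline{m}$; it is a sublattice of the $Q^*$-stable lattice, because the meet and join both preserve $w$'s partner when the two arguments pair her with $\overline{m}$. Both $\mu_{Q_i}$ lie in $S$, so it suffices to show that each is the unique men-optimal element of $S$. Suppose for contradiction some $\mu\in S$ gave $\mu[m]>\mu_{Q_i}[m]$ in $Q(m)$ for some man $m$. I take the join $\nu=\mu\vee\mu_{Q_i}$ in the $Q^*$-stable lattice; then $\nu$ is $Q^*$-stable with $\nu[w]=\overline{m}$ (the join in the men's order gives $w$ the less preferred of the two partners, and both equal $\overline{m}$), and $\nu[m]\ge\mu[m]>\mu_{Q_i}[m]$.

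The main obstacle, and the key step, is showing $\nu$ is actually $Q_i$-stable. Any new $Q_i$-blocking pair of $\nu$ relative to $Q^*$ must take the form $(m^*,w)$ with $m^*$ above $\overline{m}$ in $Q_i(w)$. For such an $m^*$ I run a direct GS-level argument on $Q_i$: since $\mu_{Q_i}[w]=\overline{m}$, if $m^*$ ever proposed to $w$ during GS on $Q_i$ he would be accepted ($m^*$ dominates $\overline{m}$ in $Q_i(w)$) and could only be displaced by men even better for $w$ in $Q_i(w)$, forcing $w$'s final partner to be weakly preferred to $m^*$ in $Q_i(w)$, contradicting that this final partner is $\overline{m}$. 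Hence $m^*$ never proposes to $w$ in GS on $Q_i$, so $\mu_{Q_i}[m^*]>w$ in $Q(m^*)$; monotonicity of the join yields $\nu[m^*]\ge\mu_{Q_i}[m^*]>w$, which rules out the blocking pair. Consequently $\nu$ is $Q_i$-stable and assigns $m$ a strictly better partner than $\mu_{Q_i}[m]$, contradicting the men-optimality of $\mu_{Q_i}$ among $Q_i$-stable matchings. This forces $\mu_{Q_i}$ to be the men-optimal element of $S$ for each $i$, and by uniqueness of men-optima in a sublattice, $\mu_{Q_1}=\mu_{Q_2}$.
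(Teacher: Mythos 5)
Your proof is correct, but it takes a genuinely different route from the paper's. Both arguments start from the same observation: for an auxiliary strategy $Q^{*}\in \C{Q}_{w}$ that places $\overline{m}$ at the top of $w$'s list, any blocking pair distinguishing $Q^{*}$ from $Q_i$ must involve $w$, and none can involve $w$ when she is matched to the man heading her list. From there the paper compares each GS output $\mu_{Q_i}$ directly with the GS output $\mu^{*}$ on $Q^{*}$: it shows $\mu_{Q_i}$ is $Q^{*}$-stable (so $\mu^{*}\geq \mu_{Q_i}$ by men-optimality), then shows $\mu^{*}$ is $Q_i$-stable by transferring any putative $Q_i$-blocking pair $(m',w)$ of $\mu^{*}$ down to $\mu_{Q_i}$ via the domination $\mu^{*}[m']\geq \mu_{Q_i}[m']$ (so $\mu_{Q_i}\geq \mu^{*}$), and concludes equality --- nothing beyond men-optimality of GS is used. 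You never touch $\mu^{*}$ at all; instead you invoke the lattice structure of stable matchings, observe that the $Q^{*}$-stable matchings fixing the pair $(w,\overline{m})$ form a sublattice $S$, and use the join $\nu=\mu\vee\mu_{Q_i}$ to manufacture a $Q_i$-stable matching dominating $\mu_{Q_i}$, contradicting men-optimality; this proves the stronger structural fact that $\mu_{Q_i}$ is the unique maximum of $S$, and the symmetric treatment of $Q_1,Q_2$ then gives uniqueness. Two remarks. First, your GS-execution detour (arguing $m^{*}$ never proposes to $w$) is unnecessary: if $m^{*}>\overline{m}$ in $Q_i(w)$ and $w>\mu_{Q_i}[m^{*}]$ in $Q(m^{*})$, then $(m^{*},w)$ would itself be a $Q_i$-blocking pair of $\mu_{Q_i}$, so $Q_i$-stability of $\mu_{Q_i}$ already yields $\mu_{Q_i}[m^{*}]>w$ in one line; the dynamics argument is just a hand-rolled proof of that stability fact. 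Second, the trade-off: the paper's two-claim comparison is shorter and entirely elementary, while your version imports the (classical, citable) lattice theorem but in exchange identifies the GS outcome as the man-optimal element of the sublattice of $Q^{*}$-stable matchings pairing $w$ with $\overline{m}$, which is a sharper statement than \Cref{Lemma:unique} itself.
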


\begin{proof}Let $Q$ be an arbitrary manipulation strategy by which $w$ attains $\overline{m}$, and let $\M$ denote the men-optimal stable matching for $Q$.
  
% \Blue{is $P$-stable} and $\mu[w]=\overline{m}$, where $\mu$ is the man-optimal
% stable matching for $Q$.  Our goal is to show that $\mu=\mu^{*}$.

\Cref{Algo:Explore} computes a list $Q(\overline{m},I;w)$, such that $w$ attains $\overline{m}$ by the strategy $Q^{*}=(Q(-w), Q(\overline{m},I;w))$.  Note that $\overline{m}$ appears at the front of the list $Q^{*}(w)=Q(\overline{m},I;w)$.  Let $\mu^{*}$ be the men-optimal stable matching for $Q^{*}$.  Our goal is to show that $\mu=\mu^{*}$.

%\begin{xclaim}
\begin{claim}
\label{claim:1}
For each $m$, $\mu^{*}[m] \geq \mu[m]$, in $Q(m)$.
\end{claim}

\begin{proof}We begin by showing that $\mu$ is $Q^{*}$-stable.  Note that $\mu$ is
$Q$-stable, and $Q$ and $Q^{*}$ differ only in $w$'s 
list.  Hence, if there is a $Q^{*}$-blocking pair in $\mu$, then it
must contain $w$. However, this is impossible since $w$ is matched with
$\overline{m}$, who is at the front of the list $Q^{*}(w)$. Therefore, 
$\M$ must be $Q^{*}$-stable.

Since $\M^{*}$ is the man-optimal stable matching for $Q^{*}$ and $\mu$ is a $Q^{*}$-stable
matching, consequently, for each man $m\in M$ : $\M^{*}[m] \geq \M[m]$, in $Q^{*}(m)$. Since $Q^{*}(m)=Q(m)$, for each man $m$, the claim is proved. 

\end{proof}

\begin{claim}\label{claim:2}
For each $m$, $\M[m] \geq \M^{*}[m]$, in $Q(m)$.
\end{claim}

\begin{proof}
We will show that $\mu^{*}$ is $Q$-stable.  Suppose that it is not.  Then there is
$Q$-blocking pair in $\mu^{*}$, and it includes $w$ for the same
reason as in the proof of \Cref{claim:1}. Let $(m', w)$ denote a
$Q$-blocking pair.  Then $w>\M^{*}[m']$, in $Q(m')$, and $m' >\M^{*}[w]$, in $Q(w)$. 

By \Cref{claim:1}, $m'$ prefers $w$ to $\M[m']$, and recall that $\M^{*}[w] = \M[w]=\overline{m}$. Hence, $(m', w)$ is a $Q$-blocking pair in $\M$, a
contradiction. Again, for the same reason as in the proof of \Cref{claim:1}, we can
conclude that $\M[m] \geq \M^{*}[m]$, in $Q(m)$, for each man $m$.

\end{proof}

By Claims \ref{claim:1} and \ref{claim:2}, $\M[m] = \M^{*}[m]$ for
 each man $m$. Thus, $\M=\M^{*}$, completing the proof of
 \Cref{Lemma:unique}. 
\end{proof}

\begin{theorem}
\Cref{Algo:Ps-NE} solves our Problem in $O(n^{4})$ time.
\end{theorem}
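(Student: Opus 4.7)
The plan is to establish correctness of \Cref{Algo:Ps-NE} by a direct biconditional, using \Cref{Theorem:Explore} to handle exhaustive enumeration of reachable partners and \Cref{Lemma:unique} to handle representativeness of the single list returned per target partner; the running time will then follow from routine accounting over the nested loops.

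For the easy direction, suppose \Cref{Algo:Ps-NE} returns ``No''. Then it has just exhibited a witnessing triple $(w, m, Q(m, I; w))$: the membership $m \in N'$ ensures that $w$ strictly prefers $m$ to $\M_{Q}[w]$ in $P(w)$; \Cref{Theorem:Explore} guarantees that the strategy $Q^{*} = (Q(-w), Q(m, I; w))$ actually matches $w$ to $m$; and the inner \textbf{if}-test certifies that $\M_{Q^{*}}$ is $P$-stable. Hence $Q^{*}$ is a $P$-stable unilateral manipulation by $w$, so $Q$ is not a $P$-stable Nash equilibrium.

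The harder direction is the crux. Suppose $Q$ is not a $P$-stable Nash equilibrium, witnessed by some woman $w$ with a $P$-stable manipulation $Q' \in \C{Q}_{w}$ yielding $m := \M_{Q'}[w]$ strictly preferred to $\M_{Q}[w]$ in $P(w)$. By \Cref{Theorem:Explore}, $m$ enters $N_{w}(Q)$ together with some list $Q(m, I; w)$, so $m$ will survive the filter into $N'$. The main obstacle is that the specific list $Q(m, I; w)$ chosen by \Cref{Algo:Explore} need not equal $Q'(w)$, so a priori the matching $\M_{Q^{*}}$ for $Q^{*} = (Q(-w), Q(m, I; w))$---which is what the algorithm actually tests---could differ from $\M_{Q'}$ and might fail $P$-stability even though $\M_{Q'}$ succeeds. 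This is precisely where \Cref{Lemma:unique} does the heavy lifting: applied with the manipulation $Q'$ and target partner $\overline{m} = m$, it forces $\M_{Q'} = \M_{Q^{*}}$. Consequently $\M_{Q^{*}}$ inherits $P$-stability from $\M_{Q'}$, the conditional test fires, and \Cref{Algo:Ps-NE} returns ``No''. The contrapositive completes correctness.

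For the running time, I would argue that the outer loop iterates $n$ times; each invocation of \Cref{Algo:Explore} costs $O(n^{3})$ and returns at most $n$ triples; and for each triple, one execution of GS-M together with one scan for $P$-blocking pairs each cost $O(n^{2})$. Summing $O(n^{3}) + n \cdot O(n^{2}) = O(n^{3})$ per woman over all $n$ women yields the claimed $O(n^{4})$ bound.
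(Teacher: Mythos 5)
Your proof is correct and takes essentially the same approach as the paper's: \Cref{Theorem:Explore} for exhaustiveness of $N_{w}(Q)$, \Cref{Lemma:unique} to transfer $P$-stability from the witnessing manipulation $Q'$ to the single tested strategy $Q^{*}$ (the ``representativeness'' step), and the same per-woman accounting of $O(n^{3}) + n\cdot O(n^{2})$ summed over $n$ women. If anything, you articulate more explicitly than the paper why \Cref{Lemma:unique} is indispensable and that the $P$-stability check itself costs $O(n^{2})$.
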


\begin{proof}For each woman, \Cref{Algo:Ps-NE} runs \Cref{Algo:Explore} whose time complexity
 is $O(n^{3})$.  The size of set $N$ is at most $n$, and \Cref{Algo:Ps-NE}
  runs GS-M, whose time complexity is $\Bo{n^{2}}$,
 on $(Q(-w), Q(m;w))$ for each $m \in N$.  Therefore, its running time
 is $\Bo{n^{3}}$ for each woman.  Since there are $n$ women, the total
 running time of \Cref{Algo:Ps-NE} is $\Bo{n^{4}}$.

%%To prove the correctness, suppose that $N' \ne \emptyset$, then it implies that a manipulation strategy was found by\Cref{Algo:Ps-NE}, and therefore $Q$ is not a \Nash. Now we proceed to check for \Ps \Nash.

Suppose that \Cref{Algo:Explore} outputs ``No''.  Then, 
 it implies that a $P$-stable manipulation strategy was found by
 \Cref{Algo:Explore}, and therefore $Q$ is not $P$-stable \Nash.  For the
 opposite direction, suppose that $Q$ is not a $P$-stable \Nash and there
 exists a woman $w$ who has a $P$-stable manipulation strategy $Q'$. % such that $\mu_{Q'}$ is $P$-stable.  
 Then $\mu_{Q'}[w]$ is added to $N$
 when \Cref{Algo:Explore} is run for $w$, and by \Cref{Lemma:unique} the matching $\mu_{Q'}$ is uniquely defined.
 Since $\mu_{Q'}$ is $P$-stable, \Cref{Algo:Ps-NE} must output ``No.''
 \end{proof}

\bibliographystyle{alpha}
\bibliography{refs-GameTheory}

\end{document}